\documentclass[11pt]{article}

\usepackage{iclr2027_conference,times}
\iclrfinalcopy
\usepackage[hypertexnames=false,hidelinks]{hyperref}
\usepackage{url}

\usepackage{latexsym}
\usepackage[T1]{fontenc}
\usepackage[utf8]{inputenc}
\usepackage{microtype}
\usepackage{inconsolata}
\usepackage{graphicx}
\usepackage{placeins}
\usepackage{float}
\usepackage{needspace}
\usepackage{booktabs}
\usepackage{amsmath}
\usepackage{amssymb}
\usepackage{amsthm}
\usepackage{multirow}
\usepackage{xcolor}
\usepackage{hyphenat}
\usepackage{ragged2e}
\usepackage{xurl}
\usepackage{wrapfig}
\usepackage{listings}

\AddToHook{cmd/ttfamily/after}{\hyphenchar\font=\defaulthyphenchar}
\newtheorem{lemma}{Lemma}
\newtheorem{proposition}{Proposition}
\graphicspath{{./}}

\title{Collective Regimes in Multi-Agent LLMs under Reasoning Effort and Communication Topology}

\author{\normalfont
\begin{tabular}[t]{@{}c@{\hspace{2.5em}}c@{}}
  \textbf{Machiko Hirota} &
  \textbf{Akshara Nadayanur Sathis Kanna}\thanks{Equal contribution.} \\
  University of Pennsylvania & Carnegie Mellon University \\
  \texttt{machikohirota@alumni.upenn.edu} &
  \texttt{anadayan@andrew.cmu.edu} \\[1.2em]
  \textbf{Ujwal Kumar}\footnotemark[1] & \textbf{Phan Xuan Tan} \\
  Shibaura Institute of Technology & Shibaura Institute of Technology \\
  \texttt{am22106@shibaura-it.ac} &
  \texttt{tanpx@shibaura-it.ac.jp}
\end{tabular}}

\begin{document}
\setlength{\emergencystretch}{2em}
\maketitle
\lhead{}

\begin{abstract}
Multi-agent LLM systems are increasingly used for deliberation and evaluation, often under the assumption that greater peer interaction leads to more reliable consensus. Existing work largely evaluates these systems through final accuracy or aggregate agreement. However, such measures do not reveal how agreement is organized in the panel. In this paper, we study \(N=50\) stateless LLM agents that update their predictions from locally visible peers, and characterize their behavior using both global and local measurements of agreement. We identify three collective regimes: synchronised, twisted (locally ordered but globally incoherent) and chimera-like, where coherent and incoherent subpopulations coexist. Increasing reasoning effort in gpt-5-mini shifts panels from variable, often fragmented outcomes toward locally ordered twisted states, and a small follow-up shows such states can also form from permuted initial conditions, whereas increasing communication connectivity drives them toward global synchronisation. Fragmentation collapses faster as algebraic connectivity increases across rewired graphs. The topology effect also appears on a non-circular judging task and across models from three providers. Finally, low spatial heterogeneity does not guarantee global consensus: 40\% of trials with $\Delta Z$ below 0.03 retain a twisted configuration through the final 20 turns.
These results show that reasoning effort and communication topology control different aspects of multi-agent coordination, and that aggregate agreement alone is insufficient to characterize collective LLM behavior.
\end{abstract}

\section{Introduction}
\label{sec:intro}

Multi-agent large language model (LLM) systems increasingly rely on multiple agents to exchange responses and reach consensus. This interaction pattern has been adopted for LLM-based evaluation \citep{chan2023chateval,verga2024replacing}, debate-based reasoning \citep{du2023improving,liang2023encouraging}, collaborative problem solving \citep{li2023camel,wu2023autogen}, and AI-assisted decision support \citep{wang2024rethinking}. A common intuition underlying these systems is that repeated exposure to peer responses should help agents converge toward a more reliable collective answer.

However, growing evidence suggests that interaction does not necessarily translate into better reasoning. Strong single-agent prompting can match multi-agent discussion on many reasoning tasks \citep{wang2024rethinking}, and multi-agent debate does not reliably outperform alternative inference strategies such as self-consistency and ensembling \citep{smit2024mad}. Interacting agents can also exhibit sycophancy, reinforcing peer responses rather than critically evaluating them \citep{baltaji2024conformity,pitre-etal-2025-consensagent}. Moreover, recent work shows that communication structure itself can affect the speed and robustness of consensus formation in networked LLM systems \citep{han2026conformity,kayaalp2026llm}. Yet this literature largely asks whether agents converge, how quickly they converge, or whether the resulting consensus is correct. We ask a complementary question: \textbf{what collective structure underlies apparent agreement?} Aggregate measures can conflate qualitatively different outcomes: a panel may reach genuine global consensus, remain locally ordered while globally incoherent, or contain persistent coherent and incoherent subgroups. We therefore study whether locally interacting LLM panels exhibit distinct collective regimes, and whether the two common controls, model-side reasoning effort and communication topology, affect those regimes in the same way.

Coupled dynamical systems provide a useful framework for distinguishing these forms of collective organisation. In the canonical Kuramoto model of coupled-oscillator synchronisation \citep{kuramoto1975,strogatz2000kuramoto}, interaction on spatially extended graphs can produce collective states beyond global synchrony. \textbf{Twisted states} are locally ordered while remaining globally incoherent \citep{wiley2006size}, whereas \textbf{chimera states} contain coherent and incoherent subpopulations within the same system \citep{kuramoto2002,abrams2004chimera}. Whether LLM agents exhibit similar patterns of collective behaviour remains unclear. We use this framework to study how agents coordinate without assuming they follow Kuramoto dynamics. We measure global and local agreement, identify persistent patterns of disagreement, and analyse how agent behaviour and network structure shape these outcomes.

To study these questions, we construct panels of \(N=50\) stateless LLM agents that repeatedly update their predictions after observing locally connected peers. Our primary experiment uses a circular time-of-day prediction task, which allows each prediction to be encoded as a phase and makes both global and local collective structure measurable. We vary model-side reasoning effort and the topology of the communication graph, and evaluate whether the resulting effects generalize across model providers and to a non-circular panel-of-judges task.

Our contributions are threefold:

\begin{enumerate}
\itemsep0pt
    \item \textbf{We identify three distinct collective regimes in multi-agent LLM panels: synchronised, twisted, and chimera-like states.} These regimes cannot be distinguished reliably using aggregate agreement alone: panels with similar summary statistics can exhibit fundamentally different local and global organisation. We therefore characterize collective state jointly through global order, local order, spatial heterogeneity, and persistence.

    \item \textbf{We show that reasoning effort and communication topology exert distinct effects on collective dynamics.} Increasing reasoning effort shifts panels from variable, often fragmented outcomes toward locally ordered twisted states; a small follow-up shows these can also form from permuted initial conditions. In contrast, degree-preserving rewiring of the communication graph drives panels toward global synchronisation. The two interventions therefore act as different controls on collective organisation.

    \item \textbf{We show that communication topology shapes the persistence of fragmented collective states.} Fragmentation collapses more rapidly on graphs with greater algebraic connectivity, linking the lifetime of fragmented collective states to the spectral structure of the communication network. This topology effect also appears across multiple model providers and on a non-circular judging task. We further show that apparently converged panels can exhibit structured minority disagreement during deliberation, while low spatial heterogeneity can conceal twisted configurations that persist through the final observation window.

\end{enumerate}

\section{Related Work}

\paragraph{Multi-agent LLM deliberation and consensus.} Multi-agent LLM systems have been proposed for evaluation, collaborative problem solving, and deliberation, with architectures including ChatEval \citep{chan2023chateval}, Multi-Agent Debate \citep{du2023improving}, self-collaboration frameworks \citep{li2023camel,wu2023autogen}, and society-of-mind approaches \citep{liang2023encouraging}. Much of this literature evaluates whether interaction improves task accuracy or produces agreement.

\paragraph{Structure and collective LLM dynamics.} A growing line of work explicitly studies communication structure in multi-agent LLM systems. \citet{kayaalp2026llm} examine opinion consensus formation among networked models under repeated neighbour interaction and report convergence toward agreement. \citet{han2026conformity} similarly show that network topology shapes conformity dynamics, with greater connectivity accelerating convergence while also affecting the robustness of the consensus.

\paragraph{Reasoning and social influence.} Reasoning interventions are typically studied at the level of individual model inference. Chain-of-thought and related techniques can improve performance on reasoning tasks \citep{nye2021scratchpads,wei2022chainofthought}, while work on sycophancy studies how model responses change under user pressure \citep{perez2023discovering,sharma2024towards}. In multi-agent settings, these factors can interact because agents must combine their own internal reasoning with information received from peers.

\paragraph{Collective states in coupled systems.} Our characterization of panel-level structure draws on the literature on synchronization in dynamical systems. Chimera states, in which coherent and incoherent subpopulations coexist, were introduced for nonlocally coupled oscillators by \citet{kuramoto2002} and characterized by \citet{abrams2004chimera}. Twisted states provide another form of structured non-consensus, in which neighbouring units remain locally ordered while the system is globally incoherent \citep{wiley2006size}. Subsequent work has documented chimera states across a range of coupled physical and biological systems \citep{martens2013chimera,tinsley2012chimera,panaggio2015chimera}. We use these concepts as a diagnostic framework for describing structured agreement in panels rather than claiming that LLM agents instantiate the Kuramoto equations.

\section{Theoretical Framework}
\label{sec:theory}

We develop a dynamical framework that separates direct measurement of collective behaviour from model-based prediction. From logged LLM trajectories, we characterize panel states using global order, local order, spatial heterogeneity, and persistence. In parallel, we fit an empirical update map to individual agent transitions, estimating effective peer coupling, stochastic variation, and copying behaviour. We then analyse this fitted dynamics to characterize twisted fixed points, derive how communication topology controls the decay of disagreement, and simulate the full stochastic model to obtain qualitative regime predictions. The Kuramoto framework motivates our description of collective states, but is not assumed as the agents’ equation of motion.

\subsection{Empirical update map}
\label{sec:update-map}

Let $x_i(t)$ denote the response of agent $i$ at turn $t$, and let
$\mathcal{N}_i$ denote its set of $d$ visible neighbours. Define the self-inclusive local mean
\[
\bar{m}_i(t) = \operatorname{mean}
\{x_i(t), x_j(t) : j \in \mathcal{N}_i\},
\]
with the appropriate mean taken over the response space. We model the observed
updates as
\begin{equation}
x_i(t+1) =
\begin{cases}
x_i(t)+\beta\bigl(\bar{m}_i(t)-x_i(t)\bigr)+\eta_i(t),
& \text{with probability } 1-q, \\[3pt]
x_j(t), \quad j\sim\operatorname{Unif}(\mathcal{N}_i),
& \text{with probability } q,
\end{cases}
\label{eq:update-map}
\end{equation}
where $\beta$ is the neighbour-coupling coefficient, $\eta_i(t)$ is zero-mean
noise with scale $\sigma$, and $q$ is the probability of copying a neighbour
verbatim. For the clock task, $x_i=\theta_i\in[0,2\pi)$ and $\bar{m}_i$ is the
circular mean; for the judges task, $x_i\in[1,100]$ and $\bar{m}_i$ is the
arithmetic mean.

We estimate $(\beta,\sigma,q)$ separately for each model--effort condition
from the observed trajectories (Appendix Table~\ref{tab:update_params}). In particular,
$\beta$ is estimated by OLS of
$x_i(t+1)-x_i(t)$ on $\bar{m}_i(t)-x_i(t)$, and $\sigma$ as the residual
standard deviation of that fit, so that $\eta_i(t)$ is the additive innovation
of the fitted rule.

For the scalar, non-copying component of~\eqref{eq:update-map}, let $G$ be a
$d$-regular communication graph with adjacency matrix $A$ and combinatorial
Laplacian $L=D-A$. Writing
$\mathbf{x}(t)=(x_1(t),\ldots,x_N(t))^\top$, the update becomes
\begin{equation}
\mathbf{x}(t+1) = W\mathbf{x}(t)+\boldsymbol{\eta}(t),
\end{equation}
where
\begin{equation}
W = (1-\beta)I+\frac{\beta}{d+1}(A+I)
= I-\frac{\beta}{d+1}L.
\label{eq:linear-update}
\end{equation}
Thus, for fixed $\beta$, the evolution of disagreement is governed by the
spectrum of $L$.

\subsection{Collective regimes and twisted states}
\label{sec:twisted-states}
\label{sec:regimes}

For a panel with estimates $\theta_j$, global order and agent $i$'s local order are
\begin{equation}
\begin{split}
r &= \Big|\tfrac{1}{N}\textstyle\sum_{j} e^{i\theta_j}\Big|,\\
Z_i &= \Big|\tfrac{1}{|\mathcal{N}_i \cup \{i\}|}\textstyle\sum_{j \in \mathcal{N}_i \cup \{i\}} e^{i\theta_j}\Big|,
\end{split}
\label{eq:order}
\end{equation}
both in $[0,1]$ \citep{abrams2004chimera}. We define collapse as five consecutive turns in which the spread of local order across agents remains below a fixed threshold (Section~\ref{sec:method}). This empirical criterion is distinct from the theoretical disagreement threshold in Proposition~\ref{prop:collapse}. We classify each panel using global order $r$, mean local order $\bar Z$, and spatial heterogeneity $\Delta Z$. \textbf{Synchronised} panels have high global order with low $\Delta Z$ and the agents converge on a common estimate. \textbf{Twisted} panels have low global order, $r$ near $0$, with low $\Delta Z$ and high $\bar{Z}$. \textbf{Chimera-like} panels are defined by coexistence rather than by spread with a locked cluster at $\bar{Z} \approx 1$ alongside an incoherent region at $\bar{Z} \ll 1$, persisting rather than collapsing. Because the calibration baselines (Section~\ref{sec:baselines}) show that spread alone cannot identify a chimera, $\Delta Z$ serves only as a collapse gate and diagnostic.

\begin{lemma}
\label{lem:twist}
Let $S_k = 1 + 2\sum_{m=1}^{R}\cos(2\pi km/N)$. For any $\beta \in (0,1]$ and winding number $k$ with $S_k > 0$, the $k$-twisted state
$\theta_i = 2\pi ki/N + c$ is a fixed point of the circular averaging map on $\mathcal{C}^R_N$.
\end{lemma}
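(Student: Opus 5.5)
We will argue directly from the definition of the circular averaging map. On the ring $\mathcal{C}^R_N$ each agent $i$ sees the $d=2R$ neighbours $i\pm m$ for $m=1,\dots,R$, with indices taken mod $N$. The non-copying, noiseless part of the update~\eqref{eq:update-map} is then
\[
\theta_i \mapsto \theta_i + \beta\,\mathrm{wrap}\bigl(\bar m_i-\theta_i\bigr),
\]
where $\bar m_i=\arg\sum_{j\in\mathcal{N}_i\cup\{i\}}e^{i\theta_j}$. The key observation is that it suffices to show $\bar m_i=\theta_i$ (mod $2\pi$) for every $i$. Once this holds, the increment is zero for every $\beta\in(0,1]$, so $\theta$ is a fixed point regardless of $\beta$.

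\textbf{Step 1: compute the local resultant.} Substitute $\theta_j=2\pi kj/N+c$ and factor out $e^{i\theta_i}$:
\[
\sum_{j\in\mathcal{N}_i\cup\{i\}} e^{i\theta_j}
= e^{i\theta_i}\Bigl(1+\sum_{m=1}^{R}\bigl(e^{2\pi i km/N}+e^{-2\pi i km/N}\bigr)\Bigr)
= e^{i\theta_i}\,S_k .
\]
Two facts justify this step. First, the index wraparound mod $N$ is harmless, because $e^{2\pi i k j/N}$ is $N$-periodic in $j$ when $k$ is an integer. This is where the winding number must be an integer, so that the state is well defined on the ring. Second, the symmetric pairing of $i+m$ with $i-m$ cancels the imaginary parts, leaving the real cosine sum $S_k$.

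\textbf{Step 2: read off the circular mean.} Because $S_k>0$ is real and positive, the resultant is a positive multiple of $e^{i\theta_i}$. Hence it is nonzero, so the circular mean is well defined, and its argument is exactly $\theta_i$. Therefore $\bar m_i-\theta_i\equiv 0$ and the update returns $\theta_i$ unchanged. The same computation also gives the local order $Z_i=S_k/(2R+1)$, which is uniform in $i$. This fact is useful later but not needed here.

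The main point requiring care, rather than a genuine obstacle, is the role of the sign condition. If $S_k<0$, the resultant equals $-|S_k|e^{i\theta_i}$, so $\bar m_i=\theta_i+\pi$. The twisted state is then not fixed, since each agent is pushed by $\beta\pi$. If $S_k=0$, the mean is undefined. We will remark on both cases to explain why the hypothesis $S_k>0$ is exactly what is needed. We will also note that the argument concerns the deterministic averaging map only, with $\eta\equiv 0$ and $q=0$. The copying branch does not preserve the state, since $\theta_j\neq\theta_i$.
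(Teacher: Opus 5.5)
Your proof is correct and follows the paper's argument: you factor $e^{\mathrm{i}\theta_i}$ out of the self-inclusive neighbourhood sum to get $S_k e^{\mathrm{i}\theta_i}$, so $S_k>0$ forces $\bar m_i=\theta_i$ and the noiseless, non-copying increment vanishes for every $\beta$. Your extra remarks (integer $k$ for the wraparound, and the $S_k\le 0$ cases) are sound. The one loose aside is the copying remark, which fails for $k=0$: there every $\theta_j=\theta_i$, so copying does preserve the synchronised state.
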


\begin{proof}
For a $k$-twisted configuration, the self-inclusive neighbourhood sum is
\[
\sum_{j\in\mathcal{N}_i\cup\{i\}} e^{\mathrm{i}\theta_j}
= e^{\mathrm{i}\theta_i}\left(1+2\sum_{m=1}^{R}\cos\frac{2\pi km}{N}\right)
= S_k e^{\mathrm{i}\theta_i}.
\]
Since $S_k>0$, its direction is $\bar{m}_i=\theta_i$. Thus, with noise and
copying absent, the averaging increment in~\eqref{eq:update-map} is zero
for every agent, and the configuration is fixed.
\end{proof}

Linearising the noiseless circular map about the $k$-twisted state yields Fourier-mode multipliers
\begin{equation}
\mu^{(k)}_{q} = (1{-}\beta) + \beta\,\frac{S_{q-k} + S_{q+k}}{2S_k}.
\label{eq:twist-multiplier}
\end{equation}
For $k=0$, these reduce to the consensus spectrum analyzed below. For $k=1$ on $C_{50}^R$, the twisted state is stable for $R\leq16$ and loses stability at $R=17$ ($R/N\approx0.34$), consistent with the stability threshold for twisted states \citep{wiley2006size}. Thus, the ring supports both the synchronized fixed point ($k=0$) and a stable one-twist fixed point ($k=1$).

\subsection{Spectral rate of consensus}
\label{sec:spectral-rate}

Equation~\eqref{eq:linear-update} makes the role of communication topology explicit. Let
\[
0=\lambda_1<\lambda_2\leq\cdots\leq\lambda_N
\]
be the eigenvalues of $L$. The consensus direction corresponds to
$\lambda_1=0$.

\begin{proposition}[Spectral contraction of disagreement]
\label{prop:collapse}
On a connected $d$-regular graph, Laplacian mode $m$ contracts by
$1-\beta\lambda_m/(d+1)$. When the Fiedler mode is slowest, as in the regular
graphs studied here, disagreement decays at rate
\begin{equation}
\gamma = \frac{\beta\lambda_2}{d+1},
\label{eq:gamma}
\end{equation}
with characteristic lifetime
\begin{equation}
\tau\simeq\frac{d+1}{\beta\lambda_2},\qquad
t^*\simeq\tau\log\frac{\Delta_0}{f}.
\label{eq:tau}
\end{equation}
\end{proposition}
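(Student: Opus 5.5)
The plan is to diagonalise the linear update~\eqref{eq:linear-update} in the Laplacian eigenbasis. Since $G$ is undirected, $L=D-A=dI-A$ is real symmetric and positive semidefinite. It therefore has an orthonormal eigenbasis $v_1,\dots,v_N$ with $Lv_m=\lambda_m v_m$, where $v_1=\mathbf{1}/\sqrt N$ and $\lambda_1=0$. Connectedness gives $\lambda_2>0$, the Fiedler value. Because $W=I-\frac{\beta}{d+1}L$ is a polynomial in $L$, it shares these eigenvectors, with eigenvalues
\[
\mu_m = 1-\frac{\beta\lambda_m}{d+1}.
\]
Writing $\mathbf{x}(t)=\sum_m a_m(t)v_m$, the noiseless recursion decouples as $a_m(t+1)=\mu_m a_m(t)$. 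This is exactly the first claim: mode $m$ contracts by $1-\beta\lambda_m/(d+1)$. With noise, $a_m(t+1)=\mu_m a_m(t)+\langle v_m,\boldsymbol\eta(t)\rangle$. Taking expectations (zero-mean $\eta$) gives the same factor for the mean disagreement.

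Next I will define disagreement as the projection onto $\mathbf{1}^\perp$,
\[
\Delta(t)=\bigl\|\mathbf{x}(t)-\bar x(t)\mathbf{1}\bigr\|_2 ,
\]
and note three facts. First, $\mathbf{1}^\perp$ is invariant under $W$. Second, $W\mathbf 1=\mathbf 1$, so the consensus component is preserved. Third, for $m\ge2$, Parseval gives
\[
\Delta(t)^2=\sum_{m\ge2}\mu_m^{2t}a_m(0)^2\le\Bigl(\max_{m\ge2}|\mu_m|\Bigr)^{2t}\Delta_0^2 .
\]

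The key step, and the main obstacle, is identifying $\max_{m\ge2}|\mu_m|$ with $\mu_2$, i.e.\ showing the Fiedler mode is slowest rather than the top mode $\lambda_N$. For that I need
\[
1-\frac{\beta\lambda_2}{d+1}\ \ge\ \Bigl|1-\frac{\beta\lambda_N}{d+1}\Bigr| .
\]
Since $\lambda_N\le 2d$ on a $d$-regular graph, for $\beta\in(0,1]$ we get $\beta\lambda_N/(d+1)<2$, so every $\mu_m\in(-1,1)$. The condition then reduces to $\beta(\lambda_2+\lambda_N)\le 2(d+1)$. This holds automatically when $\beta\le 1$ and $\lambda_N\le d+1+\ldots$ fails in general (e.g.\ bipartite-like spectra). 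I will therefore not prove it universally; the statement already adds the hypothesis ``when the Fiedler mode is slowest''. I will verify it for the ring $C_N^R$ via its explicit circulant spectrum $\lambda_q=2R-2\sum_{m=1}^R\cos(2\pi qm/N)$, noting that $\mu_q=(1-\beta)+\beta S_q/(2R+1)$ with $S_q\ge -\text{(bounded)}$, and for the rewired graphs by direct numerical check.

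Under that hypothesis, $\Delta(t)\le(1-\gamma)^t\Delta_0\le e^{-\gamma t}\Delta_0$ with $\gamma=\beta\lambda_2/(d+1)$. The bound is asymptotically tight whenever $a_2(0)\neq0$. This gives the lifetime $\tau=1/\gamma$. Solving $\Delta_0e^{-t/\tau}=f$ yields $t^*=\tau\log(\Delta_0/f)$, which is the ``$\simeq$'' in~\eqref{eq:tau}. The approximation $-\log(1-\gamma)\approx\gamma$ is valid for small $\gamma$. For the circular task I will remark that the same conclusion holds for the linearisation about the $k=0$ fixed point, consistent with~\eqref{eq:twist-multiplier} at $k=0$.
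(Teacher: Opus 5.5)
Your proposal is correct and follows the paper's own argument. Like the paper, you diagonalise $W=I-\frac{\beta}{d+1}L$ in the Laplacian eigenbasis, read off the multipliers $1-\beta\lambda_m/(d+1)$, reduce ``Fiedler mode slowest'' to $\beta(\lambda_2+\lambda_N)\le 2(d+1)$, and replace $(1-\gamma)^t$ by $e^{-\gamma t}$ to get $\tau$ and $t^*$; your Parseval bound on $\|\mathbf{x}(t)-\bar x(t)\mathbf{1}\|_2$ just makes the paper's informal ``disagreement'' precise.

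Two clean-ups. First, replace the unfinished sentence about when the condition ``holds automatically'' with this valid observation: $\lambda_2\le\lambda_N\le 2d$ gives $\beta(\lambda_2+\lambda_N)\le 4d\beta\le 2(d+1)$ on every $d$-regular graph whenever $\beta\le(d+1)/(2d)$. Second, with noise only the expected mode amplitudes $\mathbb{E}[a_m(t)]$ contract by $\mu_m$; the disagreement itself settles at a noise-supported floor rather than decaying.
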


Appendices~\ref{app:derivations} and~\ref{app:spectral-audits} provide the proof and extensions. The model predicts faster decay of disagreement with stronger coupling $\beta$ or greater network connectivity $\lambda_2$.

Equation~(7) gives the linearized timescale. Since empirical collapse is defined using local-order spread, we evaluate its predicted rank ordering and absolute calibration separately. We also simulate the full update map over $(\beta,\sigma,q)$. Appendix Figure~\ref{fig:phase-map} reports where this succeeds and fails. On the ring, the modes slower than the random-regular gap are the lowest spatial harmonics, so surviving disagreement is organised at low spatial frequency (Appendix~\ref{app:graph-spectral}).

\subsection{Numerical predictions of the fitted model}
\label{sec:numerical-predictions}
\label{sec:phase-map}

The preceding analysis characterizes fixed points and
the decay of perturbations around them, but does not
provide a closed-form boundary between collective regimes
under noise and copying. We therefore simulate the full
stochastic update map~\eqref{eq:update-map} over $(\beta,\sigma,q)$ for each graph topology and classify the resulting trajectories
using the criteria of Section~\ref{sec:regime-classification}.

The simulations yield two qualitative patterns.
First, increasing graph connectivity moves the fitted
dynamics toward synchrony, consistent with the spectral
contraction analysis above. Second, within the parameter
grid simulated here, chimera-like states appear only
when $q>0$. No chimera-like cells occur on the $q=0$
grid. This is a numerical property of the fitted
approximation over the tested parameter range, rather
than a general necessity result for chimera-like
coexistence.

Section~\ref{sec:results} compares these predictions with the observed LLM trajectories. The fitted map reproduces some qualitative topology-dependent patterns, but its regime predictions are sensitive to initialization.
In particular, under permuted initialization, the model predicts twisted outcomes less frequently than observed in the limited LLM follow-up (Appendix~\ref{app:permuted-init}).

\section{Experimental Design}
\label{sec:method}

We design the experiments to characterize the collective
regimes of Section~\ref{sec:theory} and to examine the effects of reasoning effort and communication topology through controlled comparisons.

\subsection{Panel Protocols and Tasks}

Each panel consists of $N=50$ stateless LLM agents on a communication graph $G$, in which agent $i$ sees a fixed set of neighbours $\mathcal{N}_i$. The scenario framing is supplied on every call. At turn $0$, each agent receives a private prior and returns an independent initial estimate. From turn $1$ onward, it receives its own estimate from the previous turn along with its neighbours' previous-turn estimates, and returns an updated estimate (Appendix~\ref{app:prompts}). Every call is a fresh API request at temperature $0$ with no chat history, so the only state carried between turns is what the prompt supplies, which is the agent's own previous estimate and those of its neighbours.

The original dataset comprises 340 LLM trials (228 clock, 112 judging) and 20 clock-task calibration baselines. The 12 permuted-initialization trials are reported separately (Appendix~\ref{app:permuted-init}).

\paragraph{Clock task.}
Each agent acts as an event analyst predicting an event's peak time in HH:MM format. Time is circular, allowing responses to be mapped to phases $\theta_i\in[0,2\pi)$. Initial priors are assigned in agent-index order, $\tau_i=\operatorname{round}(1439i/(N-1))$ minutes, forming an operationally twisted configuration on the ring. To assess initialization sensitivity, we also permute the same priors across agents, preserving their distribution while disrupting spatial ordering (Appendix~\ref{app:init-sensitivity}).

From turn 1 onward, clock-task agents are instructed to give equal weight to their previous estimate and their neighbours' collective signal. The coefficient $\beta=0.5$ encodes this verbal instruction in the linear update model; it is not explicitly supplied to agents. Because $\bar{m}_i$ is self-inclusive, exact equal weighting corresponds to $\beta=(d+1)/(2d)$, i.e. $0.55$ at $R=5$ and $0.54$ at $R=7$; we use $0.5$ as an approximate reference. The judging task does not use this instruction.

\paragraph{Judges task.} Identically structured panels score strong, weak, and ambiguous passages on a $1$--$100$ scale. The scalar update uses the arithmetic mean (Section~\ref{sec:update-map}), with initial impressions
$\tau_i=1+\operatorname{round}(99i/(N-1))$.

\paragraph{Common settings.} Panels run for $T=30$ turns, or for $T=60$ and $T=120$ turns in the rewiring and lifetime experiments. Requests are routed through the OpenRouter API.

\paragraph{API failure handling.}
Failed requests are retried, with persistent failures
recorded as fallbacks (Appendix~A.1).

\subsection{Experimental Conditions}
\label{sec:topologies}

On the ring $\mathcal{C}^R_N$, each agent sees its $R$ nearest neighbours on either side, giving degree $d=2R$. The primary topology control uses symmetric double-edge swaps to remove spatial structure while preserving every degree. Watts--Strogatz (WS) graphs at $p\in\{0.1,0.5\}$ provide intermediate spectra. They preserve mean degree and are therefore used for spectral scaling. A secondary directed-random control fixes out-degree at $2R$. For each graph, $\lambda_2$ is computed from the symmetrized combinatorial Laplacian, and its degree sequence is logged.

Our experiments examine how ring radius, reasoning effort,
and network topology affect collective behaviour. We also
compare 14 model/effort conditions across six providers and
test whether the findings transfer to three judging passages.
The radius sweep uses \texttt{Qwen-2.5-72b} and \texttt{gpt-5-nano}, while
the remaining controlled experiments primarily use
\texttt{gpt-5-mini}, with \texttt{Gemini-3.1-Flash-Lite} providing an
additional model comparison.

In a separate topology follow-up using Qwen, ring trials use
$R\in\{5,7\}$, but rewired controls are available only
at $R=7$. We therefore report an $R=7$-matched
comparison alongside the pooled result
(Appendix~\ref{app:trial-counts}).

\subsection{Outcome and Regime Assignment}
\label{sec:regime-classification}

At every turn we compute $r$ and $Z_i$ (Eq.~\eqref{eq:order}). Over the steady
window $t\geq20$, these yield $\bar Z_i$, its panel mean $\bar Z$, and
$\Delta Z=\max_i\bar Z_i-\min_i\bar Z_i$. Collapse begins at the first of five consecutive turns whose instantaneous local-order spread is below 0.03 for the clock task or 0.10 for judging.

This criterion measures the disappearance of spatial heterogeneity in local
order, rather than convergence to global consensus. In particular, a twisted
configuration may satisfy the collapse criterion while retaining low global
order. We therefore interpret collapse times as the duration of spatially
heterogeneous collective organisation, with global synchronisation assessed
separately through $r$.

The shared classifier applies labels in order: \emph{synchronised} if $r>0.9$; \emph{twisted} if $r<0.1$ and $\bar Z>0.7$; and \emph{chimera-like} if the trial has not collapsed, $\max_i\bar Z_i>0.9$, and $\min_i\bar Z_i<0.7$. All others are \emph{incoherent}. We apply the same classification thresholds to LLM trials, baselines, and simulations.

For judging, local and global order use bounded-interval analogues normalized by the scale range,
\begin{equation}
Z_i = \mathrm{clip}\!\left(1 - \frac{2\,\mathrm{std}\{x_j : j \in \mathcal{N}_i \cup \{i\}\}}{99}\right),
\end{equation}
with $r$ defined over all scores and both clipped to $[0,1]$. To estimate collapse times, we use $1-r(5)$ as a proxy for initial disagreement and the empirical collapse threshold as an approximate residual floor. Rank ordering and absolute error are evaluated separately as observed collapse uses local-order spread.
As a robustness check, excluding agent $i$ from its own neighbourhood changes $Z_i$ by at most $5\%$
(Appendix~\ref{app:local-order}).

\subsection{Baselines and Reproducibility}
\label{sec:baselines}
\label{sec:inference-repro}

At $R\in\{5,7\}$, two non-LLM baselines
($n=5$ per cell) either answer uniformly at random
or copy a random neighbour, testing whether large
$\Delta Z$ alone implies chimera-like structure.

We report absolute differences with percentile-bootstrap 95\% confidence intervals over trials (20,000 resamples). Cell means use exact two-sided permutation tests when feasible and Monte Carlo tests (100,000 draws) otherwise, with trials as the independent units. Lifetime scaling uses Spearman correlation between collapse time and $1/\lambda_2$. Since $\lambda_2$ varies only across graph instances, its permutation test
resamples instances rather than trials. The primary lifetime correlation
excludes right-censored trials; a sensitivity analysis ranks them above every observed collapse.

Trials with occasional fallback responses are retained in the primary analysis. Four judging trials with sustained panel-wide infrastructure failures were replaced under the original experimental conditions. The failed attempts are excluded from the primary analysis and retained in the raw logs. The replacement procedure and sensitivity analyses are documented in Appendix~\ref{app:trial-validity}.

Experiments use base seed 20251109. SQLite logs and JSON manifests record requests, responses, retries, fallbacks, usage, cost, graph statistics, prompts, and model routes.

\section{Results}
\label{sec:results}

We test the predictions of Section~\ref{sec:theory} using the steady-window metrics defined in Section~\ref{sec:method}.

\paragraph{Baseline calibration.} At $R=5$ and $R=7$, uniform-random agents produce $\Delta Z=0.162$ and $0.140$, while the copy-a-neighbour agents produce $0.414$ and $0.247$. The corresponding cell means of the maximum local order, $\max_i \bar Z_i$, are at most $0.346$ and $0.835$, respectively. So, unstructured randomness alone can produce substantial spatial variation.

\subsection{Interaction radius and collective regimes}
\label{sec:phase-diagram}
\label{sec:taxonomy}

Both radius sweeps show the greatest spatial heterogeneity at intermediate radius ($R=6$), but diverge at high radius: Qwen-2.5-72b remains fragmented, whereas gpt-5-nano becomes more globally ordered (Appendix~C.1; $n=3$ per radius,
exploratory). Longer runs show that fragmented configurations can eventually collapse (Section~5.3). Across 14 model/effort conditions, we observe chimera-like, twisted, incoherent, and synchronised outcomes.

\subsection{Reasoning effort and network topology}
\label{sec:interventions}

Increasing reasoning effort reduces spatial heterogeneity without producing global consensus. For \texttt{gpt-5-mini}, fitted coupling increases from $\beta = 0.221$ to $0.468$--$0.489$, while residual noise decreases and model fit improves across effort levels (Appendix Table~\ref{tab:update-params-full}), and spatial heterogeneity decreases
by approximately $0.42$ ($p < 0.001$). The panels do not converge. All $20$ higher-effort trials are classified as twisted rather than synchronised, with global order $r = 0.010$ and $0.008$ at low and medium effort, respectively, while retaining high local order (Figure~\ref{fig:effort}). \texttt{gemini-3.1-flash-lite} moves the same way under a different reasoning control. Its fitted coupling rising from $\beta\approx0$ at the default setting, where the panel shows no measurable response to its neighbours ($R^{2}=0.002$), to $0.411$ and $0.589$, which lowers ring heterogeneity from $0.219$ to $0.118$ and $0.070$ (Appendix Figure~\ref{fig:gemini}).

\begin{figure}[h]
  \centering
  \includegraphics[width=0.55\linewidth]{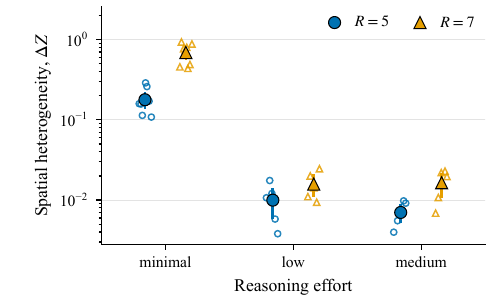}
  \caption{Reasoning-effort intervention in the clock task. Trial-level spatial heterogeneity $\Delta Z$ is shown on a logarithmic axis; higher-effort panels remain twisted rather than globally synchronised.}
  \label{fig:effort}
\end{figure}

Degree-preserving rewiring drives the tested panels toward global synchronisation while holding the number of visible peers and the symmetry of visibility fixed, so that only spatial structure is removed. Every rewired control panel synchronises, whereas the matched ring panels stay heterogeneous (Figure~\ref{fig:topology}). Reasoning effort and communication topology therefore act on different parts of the dynamics and end in different states.

\begin{figure}[!ht]
  \centering
  \includegraphics[width=0.82\linewidth]{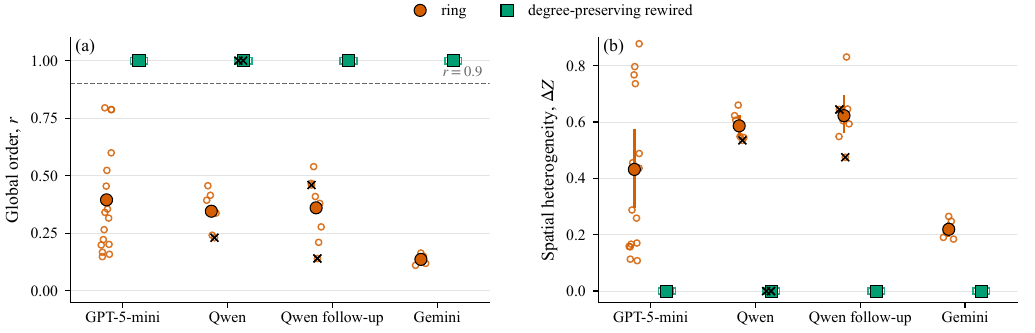}
  \caption{Communication-topology intervention in the clock task. (a) Global order $r$ and (b) spatial heterogeneity $\Delta Z$ for ring and degree-preserving rewired controls.}
  \label{fig:topology}
\end{figure}

Both effects are robust to the initial condition. The ordered priors already satisfy the twisted-state criteria at both radii with $r(0) = 0.019$, ($\bar{Z} = 0.923$ at $R = 5$), and ($\bar{Z} = 0.859$ at $R = 7$). The same start nevertheless ends chimera-like in all eight \texttt{minimal}-effort trials and twisted in all five \texttt{medium}-effort trials. A permuted-prior follow-up produced twisted states with opposite windings in both \texttt{medium}-effort trials, but in none of the five
\texttt{minimal}-effort trials (Appendix~\ref{app:permuted-init}).

The topology effect also appears in the scalar judging task, where ring panels exhibit greater spatial heterogeneity than degree-preserving rewired controls (mean difference $0.1281$,
$p<0.0001$). The pooled minimal-versus-medium comparison yields a mean difference of $0.0370$ (95\% CI $[0.0195,0.0534]$, $p=0.0006$), whereas minimal-versus-low and low-versus-medium comparisons are not significant ($p=0.2781$ and $p=0.1417$, respectively). Thus, the judging-task results do not establish the same monotonic effort-response pattern observed in the clock task.

\begin{figure}[!ht]
  \centering
  \includegraphics[width=0.82\linewidth]{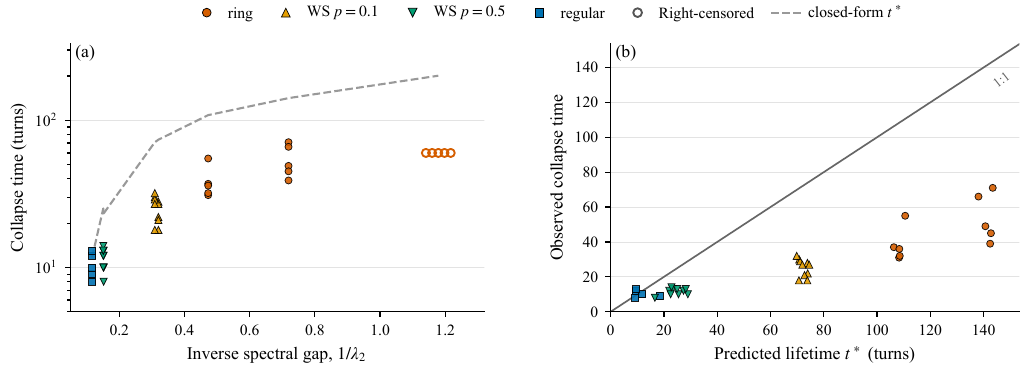}
  \caption{Fragmentation lifetime across eight graph instances for minimal-effort \texttt{gpt-5-mini}. Observed collapse time is associated with $1/\lambda_2$; the theoretical lifetime approximation overpredicts absolute lifetimes for the seven graph conditions with observed collapses. Panel (b) includes observed collapses only; its solid diagonal denotes perfect agreement between predicted and observed lifetimes.}
  \label{fig:spectral-lifetime}
\end{figure}

\subsection{Persistence of fragmented states}
\label{sec:practical}

Across eight graph instances ($\rho = 0.889$ over the 35 collapsed trials;
Figure~\ref{fig:spectral-lifetime}), collapse time increases which supports the predicted rank ordering across the tested topologies. Since $\lambda_2$ varies only across graph instances, we permute it across instances which gives $p = 0.0048$ (graph-level $\rho = 0.929$, $p = 0.0067$). The predicted and observed lifetimes rank-correlate at $\rho = 0.964$ ($p = 0.0028$, graph level). However, the theoretical approximation overpredicts absolute lifetimes by $1.13$--$3.18\times$ across the seven graph conditions with observed collapses. Appendix Figure~\ref{fig:ws-sweep} provides the graph-level audit and Appendix D.3 the lifetime calibration.

This temporal distinction changes how aggregate agreement should be read. Among trials with steady-window $r\geq0.9$, 13.4\% contain a minority during that window, but none retains it in the final 20 turns. Conversely, 40.0\% of trials with $\Delta Z<0.03$ remain fragmented at the observation horizon, where all are classified as twisted. Appendix Figure~\ref{fig:hidden-fragmentation} reports the complete breakdown. These findings demonstrate the importance of reporting global
and local agreement measures together with the observation window.

\section{Discussion}
\label{sec:discussion}
\label{sec:mechanism}

Our results distinguish two factors shaping collective
behaviour. Higher reasoning effort is associated with stronger fitted coupling and greater local coordination,
but not necessarily global consensus. As Lemma~\ref{lem:twist}
illustrates, locally ordered twisted states can sustain
global disagreement. Network topology, meanwhile,
affects how disagreement persists: local rings can
maintain structured disagreement, whereas degree-preserving rewiring promotes synchronisation. A twisted outcome could in principle be the ordered initialisation surviving rather than forming. Two observations argue otherwise. First, reasoning effort decides whether that structure survives, since the same start is destroyed at \texttt{minimal} effort and retained at \texttt{medium}. Second, twisted states still formed under permuted priors, winding in opposite directions ($k = +1$ and $k = -1$), which a start with no winding cannot supply. At $n = 2$ in the \texttt{medium} cell, this bounds the confound rather than measuring how often such formation occurs.

The fitted update model provides a mechanistic interpretation of these observations within a simplified approximation of agent behaviour. Proposition~\ref{prop:collapse} shows that, for the scalar non-copying dynamics, the contraction rate depends jointly on coupling strength $\beta$ and graph connectivity $\lambda_2$, through
$\gamma=\beta\lambda_2/(d+1)$. This relationship is consistent with the observed topology-dependent collapse times, although the model's explanatory power varies across effort levels ($R^2=0.09$ at minimal effort versus $0.77$
at low effort). Thus, the spectral analysis provides a partial explanation of the observed collective dynamics rather than a complete account of LLM agent behaviour.

These findings motivate a broader approach to
evaluating multi-agent consensus. Global agreement
alone can conceal persistent local coordination and
disagreement. Measuring global order, local order,
spatial heterogeneity, and persistence together
distinguishes group-wide consensus from locally
coordinated states that remain globally divided.

\section{Limitations}

Our experiments use fixed-topology panels and controlled tasks; extending
these findings to larger, adaptive, or real-world multi-agent systems remains
an open question. Because the ordered initialization already satisfies the
twisted-state criteria, higher reasoning effort may partly preserve this
initial structure rather than induce its formation. The permuted-initialization
follow-up demonstrates that twisted states can also emerge from initially
disordered configurations, but its limited sample size ($n=2$) does not
establish whether reasoning effort promotes their formation.

The fitted linear model captures aspects of agent
coordination but has weak fit at minimal effort and
does not reliably predict twisted outcomes under
permuted initialisation. The spectral approximation
also overpredicts absolute collapse times. Small
cell sizes limit statistical power, and our operational
regime labels do not establish classical chimera
dynamics or that LLM agents follow Kuramoto dynamics.

\section{Conclusion}
\label{sec:conclusion}

We show that locally interacting LLM agents can exhibit collective regimes, including synchronisation, twisted states, and chimera-like fragmentation. Increasing reasoning effort can strengthen local coordination without producing global consensus,
whereas changing communication topology can drive panels toward synchronisation even when the number of peers is held fixed. Fragmented states can persist over finite observation horizons, with their lifetime depending on network connectivity.

These findings highlight the distinction between
individual responsiveness and collective agreement.
Evaluating multi-agent systems therefore requires
attention not only to global consensus, but also to
local structure and the timescale over which
collective behaviour is observed.

\section*{Reproducibility Statement}

We provide experimental configurations, prompts, response records, model-routing information, and analysis code to reproduce the reported results from archived experimental data. The appendix documents trial validity, replacement procedures, statistical analyses, and costs. Analyses and figures can be regenerated deterministically from the archived records; re-executing the API experiments may yield different observations as model versions and provider routing can change.

\section*{AI Use Statement}
Large language models were used as research assistance tools during the development of this work. They were used to aid and polish writing, support research ideation and aspects of experimental development, and assist in developing and checking mathematical arguments and proofs. All research decisions, experimental design, analyses, mathematical claims, and conclusions were reviewed and verified by the authors, who take full responsibility for the content of the paper.

\begingroup
\raggedright
\bibliographystyle{iclr2027_conference}
\bibliography{custom}
\endgroup

\clearpage
\appendix
\renewcommand{\theHfigure}{appendix.\arabic{figure}}
\renewcommand{\theHtable}{appendix.\arabic{table}}

\section{Experimental protocols and reproducibility}

\Needspace{7\baselineskip}
\subsection{Judging-trial validity and replacement}
\label{app:trial-validity}

A fallback is a synthetic carry-forward response after exhausted request retries, not a new model observation. Occasional fallbacks are retained. During the post-experiment quality audit, judging trials were deemed protocol-invalid if (V1) at least five consecutive turns contained no valid agent responses, (V2) no valid response occurred at turn zero, or (V3) no valid response occurred at $t\geq20$. These criteria assess observation quality rather than trial outcomes.

Of 165 original judging runs (135 LLM and 30 calibration baselines), 23 total-failure LLM runs yielded no usable observations, and four additional trials failed the validity criteria. Each invalid trial was replaced once, with the first complete, protocol-valid replacement accepted before outcome inspection. The primary analysis contains 108 valid originals and four replacements ($n=112$); invalid originals remain in the raw logs but are excluded.

Replacements retained the original experimental conditions, prompts, model identifier, and initialisation. Two conditions used longer request timeouts (180 versus 60 seconds), and execution was sequential rather than batched. All four replacements completed without fallbacks. Because the model identifier was an unpinned alias, identical underlying model versions cannot be verified. Original-inclusion and exclusion-only sensitivity analyses are reported separately. The pooled minimal-versus-medium effort comparison survives Holm correction; passage-specific contrasts remain exploratory.

\Needspace{7\baselineskip}
\subsection{API spend and reproducibility costs}
\label{app:spend}
\begin{table}[H]
  \centering
  \small
  \setlength{\tabcolsep}{4pt}
  \begin{tabular}{lrrr}
    \toprule
    \textbf{Phase} & \textbf{Trials} & \textbf{Calls} & \textbf{Spend} \\
    \midrule
    Initial sweep             &  92 & 142{,}600 & \$90.44 \\
    Extensions                &  60 &  93{,}000 & \$82.69 \\
    Follow-up experiments     & 188 & 381{,}400 & \$138.12 \\
    \midrule
    \textbf{Reported total}   & \textbf{340} & \textbf{617{,}000} & \textbf{\$311.25} \\
    \bottomrule
  \end{tabular}
  \caption{LLM trials, model calls and API spend behind the reported results, summed from the per-trial manifests. Non-LLM baselines incur no cost. Including development pilots and superseded runs, total project spend was \$471.18.}
  \label{tab:spend}
\end{table}

\Needspace{7\baselineskip}
\subsection{Per-cell trial counts}
\label{app:trial-counts}

Table~\ref{tab:counts-clock} lists the trial count behind the clock-task
results and Table~\ref{tab:counts-followup} does the same for the follow-up
experiments.
The coupling-radius sweeps use $n=3$ per radius, and
\texttt{Mistral-large-2411} is capped at $n=2$ because the model was delisted
by OpenRouter during the study. Bootstrap $95\%$ confidence intervals and
permutation tests are computed on exactly these trials.

\begin{table}[H]
  \centering
  \small
  \setlength{\tabcolsep}{3pt}
  \footnotesize
  \begin{tabular}{llllc}
    \toprule
    \textbf{Reported in} & \textbf{Model (effort)} & \textbf{Topology} & \textbf{$R$} & \textbf{$n$ per cell} \\
    \midrule
    Table~\ref{tab:phase-diagram} & Qwen-2.5-72b (default)   & ring & 3, 4, 5, 6, 7, 9, 10 & 3 \\
                                  & gpt-5-nano (minimal)     & ring & 3, 4, 5, 6, 7, 9, 10 & 3 \\
    \midrule
    Section~\ref{sec:interventions}      & gpt-5-mini (minimal)     & ring & 5, 7 & 8 \\
                                  & gpt-5-mini (low)         & ring & 5, 7 & 5 \\
                                  & gpt-5-mini (medium)      & ring & 5, 7 & 5 \\
    \midrule
    Section~\ref{sec:interventions}    & gpt-5-mini (minimal)     & ring               & 5, 7 & 8 \\
                                  & gpt-5-mini (minimal)     & directed random    & 5    & 2 \\
                                  & gpt-5-mini (minimal)     & $2R$-regular       & 5, 7 & 3 \\
                                  & Qwen-2.5-72b (default)   & ring               & 5, 7 & 3 \\
                                  & Qwen-2.5-72b (default)   & directed random    & 5, 7 & 2 \\
                                  & Qwen-2.5-72b (default)   & $2R$-regular       & 5, 7 & 3 \\
    \midrule
    Section~\ref{sec:taxonomy}    & gpt-oss-120b (default)   & ring & 5, 7 & 3 \\
                                  & gpt-oss-120b (minimal)   & ring & 5, 7 & 5 \\
                                  & DeepSeek-V4-flash        & ring & 5 / 7 & 5 / 2 \\
                                  & Llama-4-scout            & ring & 5, 7 & 5 \\
                                  & Mistral-large-2411       & ring & 5, 7 & 2 \\
                                  & Mistral-large-2512       & ring & 5, 7 & 5 \\
                                  & gemini-3.1-flash-lite (default) & ring & 5, 7 & 5 \\
                                  & gemini-3.1-flash-lite (low, medium) & ring & 7 & 5 \\
    \midrule
    Section~\ref{sec:baselines}   & uniform-random agent     & ring & 5, 7 & 5 \\
                                  & copy-a-neighbour agent   & ring & 5, 7 & 5 \\
    \bottomrule
  \end{tabular}
  \caption{Trial counts for the clock-task results, $T=30$. The directed
  random control matches out-degree only, and its \texttt{gpt-5-mini} arm
  exists only at $R=5$. DeepSeek-V4-flash was also run on an incomplete radius
  grid and is excluded from Appendix Figure~\ref{fig:phase-diagram}.}
  \label{tab:counts-clock}
\end{table}

\begin{table}[H]
  \centering
  \small
  \setlength{\tabcolsep}{3pt}
  \footnotesize
  \resizebox{\textwidth}{!}{%
\begin{tabular}{llllccc}
    \toprule
    \textbf{Experiment} & \textbf{Model (effort)} & \textbf{Topology} & \textbf{$R$} & \textbf{$T$} & \textbf{$n$} & \textbf{Reported in} \\
    \midrule
    Lifetime and rewiring        & gpt-5-mini (minimal) & ring            & 5 & 60  & 5  & Section~\ref{sec:practical} \\
                                 & gpt-5-mini (minimal) & ring            & 6 & 120 & 5  & Section~\ref{sec:practical} \\
                                 & gpt-5-mini (minimal) & ring            & 7 & 120 & 5  & Section~\ref{sec:practical} \\
                                 & gpt-5-mini (minimal) & WS, $p{=}0.1$   & 7 & 60  & 10 & Section~\ref{sec:interventions} \\
                                 & gpt-5-mini (minimal) & WS, $p{=}0.5$   & 7 & 60  & 10 & Section~\ref{sec:interventions} \\
                                 & gpt-5-mini (minimal) & $2R$-regular    & 7 & 60  & 5  & Section~\ref{sec:interventions} \\
                                 & Qwen-2.5-72b (default) & ring / ring / $2R$-reg. & 5 / 7 / 7 & 30 & 4 / 4 / 3 & Section~\ref{sec:interventions} \\
    \midrule
    Third provider               & gemini-3.1-flash-lite (default) & ring         & 5, 7 & 30 & 5 & Section~\ref{sec:interventions} \\
                                 & gemini-3.1-flash-lite (default) & $2R$-regular & 7    & 30 & 5 & Section~\ref{sec:interventions} \\
                                 & gemini-3.1-flash-lite (low)     & ring         & 7    & 30 & 5 & Section~\ref{sec:interventions} \\
                                 & gemini-3.1-flash-lite (medium)  & ring         & 7    & 30 & 5 & Section~\ref{sec:interventions} \\
    \midrule
    Judges task                  & gpt-5-mini (minimal) & ring            & 5 & 30 & 15 & Section~\ref{sec:interventions} \\
                                 & gpt-5-mini (minimal) & ring            & 6 & 30 & 17 & Section~\ref{sec:interventions} \\
                                 & gpt-5-mini (minimal) & ring            & 7 & 30 & 15 & Section~\ref{sec:interventions} \\
                                 & gpt-5-mini (low)     & ring            & 7 & 30 & 15 & Section~\ref{sec:interventions} \\
                                 & gpt-5-mini (medium)  & ring            & 7 & 30 & 20 & Section~\ref{sec:interventions} \\
                                 & gpt-5-mini (minimal) & $2R$-regular    & 7 & 30 & 15 & Section~\ref{sec:interventions} \\
                                 & gpt-5-mini (minimal) & WS, $p{=}0.5$   & 7 & 30 & 15 & Section~\ref{sec:interventions} \\
                                 & uniform-random, copy-a-neighbour & ring & 5, 6, 7 & 30 & 5 & Section~\ref{sec:interventions} \\
    \midrule
    Practical rate               & \multicolumn{5}{l}{re-analysis of 228 LLM clock trials and 20 baselines; no new runs} & Section~\ref{sec:practical} \\
    \bottomrule
  \end{tabular}
}
  \caption{Trial counts for the follow-up experiments. Judges-task cells pool the three passages, each at $n \geq 5$. The Qwen-2.5-72b rows of the lifetime and rewiring experiment are spot-checks at the original horizon.}
  \label{tab:counts-followup}
\end{table}

\paragraph{$R=7$-matched topology contrast.} For the \texttt{Qwen-2.5-72b}
topology follow-up, the rewired control exists only at $R=7$. The $R=7$-matched contrast gives $\Delta Z = 0.637$ (ring, $n=4$) against $0.000$
(degree-preserving, $n=3$), a difference of $0.637$ $[0.518, 0.772]$ with
exact permutation $p=0.0286$. Pooling the ring trials at $R\in\{5,7\}$ gives $0.623$ $[0.561, 0.693]$, $p=0.0061$. The two agree in magnitude. Pooling affects only the number of permutations available.

\Needspace{7\baselineskip}
\subsection{Clock-task prompts}
\label{app:prompts}

The following templates were used for the clock task
(\texttt{v1.0.0}). Each request contains the same
system message and a turn-specific user message.
Braced placeholders denote agent- and turn-specific
values.

\paragraph{System message (all turns).}
\begin{lstlisting}[basicstyle=\ttfamily\scriptsize,
breaklines=true,columns=fullflexible]
You are an expert event analyst participating in a closed forecasting panel.
Your sole task is to estimate the single peak time of an event, expressed as a 24-hour clock time in HH:MM format (e.g. 09:15, 23:40, 00:05).

KEY CONSTRAINT — time is a circular variable:
  • The day wraps: the minute after 23:59 is 00:00.
  • Proximity must be evaluated on this circle, not on a number line.
  • Example: 23:50 and 00:10 are only 20 minutes apart, not 23 hours 40 minutes apart.
  • When averaging or comparing times near midnight, always apply circular arithmetic.

OUTPUT CONTRACT:
  Every response must end with a JSON object on its own line, exactly matching
  this schema — no text may follow it:
  {"reasoning": "<concise justification, <= 3 sentences>", "peak_time": "HH:MM"}
\end{lstlisting}

\paragraph{User message (turn 0).}
\begin{lstlisting}[basicstyle=\ttfamily\scriptsize,
breaklines=true,columns=fullflexible]
[Turn 0 — initial estimate]
Agent ID: {agent_id}

Prior intelligence from pre-panel briefing material suggests the event peak may occur around {hint}. Treat this as a weak prior, not a confirmed reading.

Produce your independent initial estimate of the peak time.
End your response with the required JSON object.
\end{lstlisting}

\paragraph{User message (turn $t\geq1$).}
\begin{lstlisting}[basicstyle=\ttfamily\scriptsize,
breaklines=true,columns=fullflexible]
[Turn {turn_t} — panel update]
Agent ID: {agent_id}

Your estimate from the previous turn: {previous_estimate}

Estimates from the analysts in your immediate neighbourhood (positions {neighbor_ids}):
{neighbor_estimates_json}

Update your estimate by giving equal weight to (a) your own previous estimate and (b) the collective signal from your neighbours. Remember to apply circular time arithmetic: if your estimate and your neighbours' estimates straddle midnight, compute distances on the circle, not on a number line.

End your response with the required JSON object.
\end{lstlisting}

The private prior appears only at turn 0; the
equal-weight instruction begins at turn 1.
The numerical coefficient $\beta=0.5$ represents
this verbal instruction in the fitted update model
and is not explicitly supplied to agents.

\subsection{Initialization Sensitivity}
\label{app:permuted-init}
\label{app:init-sensitivity}

\paragraph{Experimental design.}
The original clock-task initialization assigns equispaced
priors in agent-index order,
$\tau_i=\operatorname{round}(1439i/(N-1))$ minutes.
On a ring with $N=50$ and $R=7$, this configuration
already satisfies the twisted-state criteria
($r(0)=0.0193$, $\overline{Z}(0)=0.8587$).
We therefore permuted the same prior values across
agents, preserving their distribution and initial
global order while disrupting their spatial ordering.

The follow-up examined three \texttt{gpt-5-mini}
conditions at $R=7$: medium-effort ring ($n=2$),
minimal-effort ring ($n=5$), and minimal-effort
degree-preserving regular graph ($n=5$).
All 12 completed permuted-initialization trials
had zero fallback responses.

\begin{figure}[htbp]
    \centering
    \includegraphics[width=\linewidth]{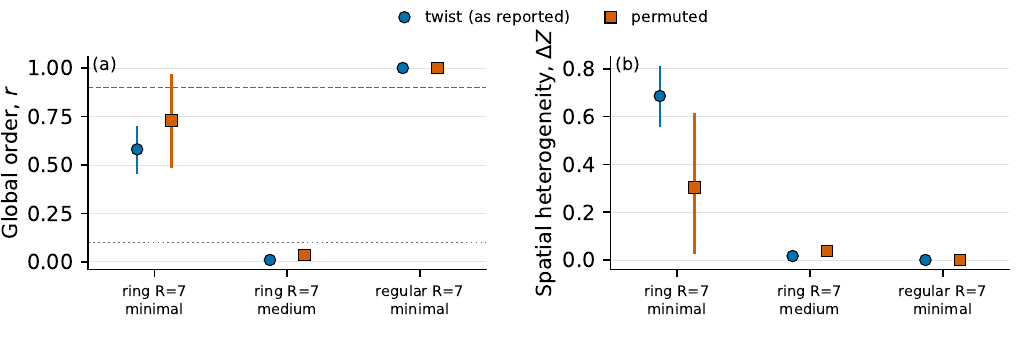}
    \caption{Initialization-sensitivity follow-up.
    Permuting the clock priors disrupts their spatial
    ordering while preserving their values. The permuted medium-effort condition contains two trials; each permuted minimal-effort condition contains five.}
    \label{fig:permuted-init}
\end{figure}

\paragraph{Twisted-state formation.}
Both permuted medium-effort ring trials started
with low local order
($\overline{Z}(0)=0.141$ and $0.117$)
and ended with
$\overline{Z}_{\mathrm{steady}}\approx0.859$.
Both satisfied the twisted-state classifier.
One trial ended with winding number $-1$, opposite to the ordered ramp, indicating that twisted spatial structure can form rather than merely persist from the original ordered initialization.

The collapse detector and regime classifier
measure different properties. One trial reached
the collapse criterion at turn 16; the other
was right-censored despite satisfying the
twisted-state classifier at the end of the
experiment. The ordered-initialization collapse
times therefore should not be interpreted as
times required to form twisted states from disorder.

\paragraph{Topology and initialization.}
Under ordered initialization, all eight
minimal-effort ring trials were classified
as chimera-like, while all three regular-graph
trials synchronized. Under permuted initialization,
the five ring trials produced two chimera-like,
two synchronized, and one incoherent outcome;
all five regular-graph trials synchronized.

Thus, regular-graph synchronization persists
across both initializations, whereas the frequency
of chimera-like ring outcomes depends on the
starting configuration. The small and unequal
sample sizes limit inference about regime
frequencies.

\paragraph{Fitted-map sensitivity.}
Under ordered initialization, the fitted map
predicted twisted outcomes in 12 of 22 evaluated
conditions, compared with 2 of 22 under
permuted initialization. For the medium-effort
ring condition, the permuted map predicted
a mixture of twisted and synchronized outcomes,
whereas both observed LLM trials ended twisted.
This limited comparison illustrates the
initialization sensitivity of the fitted
approximation rather than establishing
quantitative agreement with the LLM dynamics.

\section{Theoretical derivations and numerical model}

\Needspace{7\baselineskip}
\subsection{Local-order convention details}
\label{app:local-order}

We follow \citet{abrams2004chimera} in computing $Z_i$ over $\mathcal{N}_i \cup \{i\}$ (the ``self-included'' convention). The robustness summary in \S\ref{sec:method} compares this to an alternative convention that excludes self and normalises by $2R$; the regime classification and $\Delta Z$ ordering are preserved under the swap. Per-cell shifts in $\bar{Z}_i$ values under the swap are tabulated in the release notebook.

\paragraph{Spatial contiguity.}
A retrospective audit of the 55 chimera-like trials found
contiguous low-order domains of at least two agents in
53 trials (median longest run: 10 agents). The two
exceptions were isolated, threshold-marginal agents.
Both chimera-like trials under permuted initialization
contained extended low-order domains (9 and 11 agents).
Thus, the operational classifier generally identifies
spatially extended heterogeneity, without requiring
the stricter dynamical criteria of classical chimeras.

\Needspace{7\baselineskip}
\subsection{Derivations for the collective dynamics}
\label{app:derivations}

\paragraph{Twisted fixed points.}
For $\theta_i=2\pi ki/N+c$, the self-inclusive neighbourhood sum is
\[
\sum_{j\in\mathcal N_i\cup\{i\}}e^{\mathrm i\theta_j}
=e^{\mathrm i\theta_i}\left(1+2\sum_{m=1}^{R}\cos\frac{2\pi km}{N}\right)
=S_ke^{\mathrm i\theta_i}.
\]
When $S_k>0$, its direction is $\theta_i$, so the noiseless averaging increment
is zero. Linearizing the circular mean gives the multipliers in
Eq.~\eqref{eq:twist-multiplier}. For $C_{50}^R$, the one-twist remains stable
through $R=16$ and loses stability at $R=17$.

\paragraph{Spectral contraction.}
Let $v_m$ be a Laplacian eigenvector with eigenvalue $\lambda_m$. From
Eq.~\eqref{eq:linear-update},
\[
Wv_m=\left(1-\frac{\beta\lambda_m}{d+1}\right)v_m.
\]
The consensus mode has multiplier one and every other mode contracts. The exact
off-consensus factor is
\[
\rho_\perp=\max\left\{1-\frac{\beta\lambda_2}{d+1},
\left|1-\frac{\beta\lambda_N}{d+1}\right|\right\}.
\]
If $\beta(\lambda_2+\lambda_N)\leq2(d+1)$, as in all regular-graph conditions
studied here, the Fiedler mode is slowest. Approximating
$(1-\gamma)^t$ by $e^{-\gamma t}$ then gives
Eqs.~\eqref{eq:gamma}--\eqref{eq:tau}.

\paragraph{Noise-supported disagreement.}
For independent additive noise of variance $\sigma^2$, the stationary squared
disagreement is approximately
\[
F\approx\frac{\sigma^2(d+1)}{2\beta}
\sum_{m=2}^{N}\lambda_m^{-1}.
\]
Small nonzero eigenvalues therefore predict both slow contraction and greater
noise-supported disagreement.

\Needspace{7\baselineskip}
\subsection{Measured update-rule parameters}
\begin{table}[H]
\centering\small\setlength{\tabcolsep}{5pt}
\begin{tabular}{@{}llrrrr@{}}
\toprule
Model & Effort & $\beta$ & $R^2$ & $\sigma$ & $q$ \\
\midrule
qwen-2.5-72b        & def.\   & 0.118 & 0.042 & 0.278 & 15.5\% \\
gpt-5-nano          & min.\   & 0.141 & 0.050 & 0.290 & 14.0\% \\
llama-4-scout       & def.\   & 0.167 & 0.054 & 0.195 & \phantom{0}8.4\% \\
gpt-5-mini          & min.\   & 0.221 & 0.089 & 0.287 & \phantom{0}6.6\% \\
mistral-large-2512  & def.\   & 0.244 & 0.081 & 0.369 & 13.4\% \\
mistral-large-2411  & def.\   & 0.336 & 0.093 & 0.422 & \phantom{0}9.5\% \\
gpt-oss-120b        & min.\   & 0.340 & 0.197 & 0.234 & \phantom{0}1.9\% \\
deepseek-v4-flash   & def.\   & 0.354 & 0.233 & 0.133 & \phantom{0}2.4\% \\
gemini-3.1-flash-lite & low   & 0.411 & 0.241 & 0.114 & \phantom{0}1.7\% \\
gpt-5-mini          & low     & 0.468 & 0.771 & 0.043 & \phantom{0}1.4\% \\
gpt-5-mini          & med.\   & 0.489 & 0.705 & 0.048 & \phantom{0}1.0\% \\
gemini-3.1-flash-lite & med.\ & 0.589 & 0.488 & 0.104 & \phantom{0}2.3\% \\
gpt-oss-120b        & def.\   & 0.626 & 0.769 & 0.071 & \phantom{0}0.0\% \\
\midrule
gemini-3.1-flash-lite & def.\ & $-0.043$ & 0.002 & 0.202 & \phantom{0}1.7\% \\
\bottomrule
\end{tabular}
\caption{Measured update-rule parameters across all 14 model/effort
conditions, ordered by $\beta$. Origin-constrained OLS of each agent's move on
its neighbour-mean offset, ring runs only, drive $|{\cdot}|\geq0.05$ rad;
$\sigma$ is the transient noise scale in radians and $q$ the verbatim-copy
rate. Equal-weight reference ($\beta=0.5$).}
\label{tab:mechanism}
\label{tab:update_params}
\label{tab:update-params-full}
\end{table}

\Needspace{7\baselineskip}
\subsection{The numerical phase diagram in detail}
\label{app:phase-map}

We simulate the full stochastic map on each graph type over a grid of $(\beta,\sigma,q)$ at $T=30$, classify each cell by the criteria of Section~\ref{sec:regime-classification}, and compare the predictions with 22 measured clock-task condition--radius cells
(Figure~\ref{fig:phase-map}). The failures on the chimera-like conditions are
systematic: every one occurs at low $\beta$ and high $\sigma$, where the linear approximation underlying the fitted map is weakest. These are the conditions whose real updates involve large circular jumps that Gaussian noise cannot produce.

\begin{figure}[H]
  \centering
  \includegraphics[width=\textwidth]{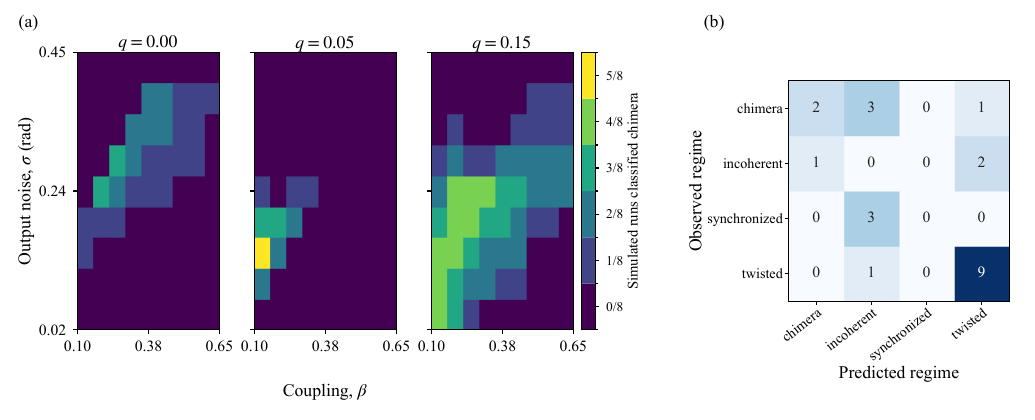}
  \caption{Fitted stochastic-map audit. (a) Fraction of simulated runs
  classified as chimera-like over the $(\beta,\sigma)$ grid on a ring at
  $R=7$ for three copying probabilities $q$ (eight runs per grid cell).
  Chimera-like cells occur in 0/81 grid cells at $q=0$, 1/81 at $q=0.05$,
  and 9/81 at $q=0.15$. (b) Observed versus predicted regimes for 22
  evaluated condition--radius cells, representing 113 trials. With measured
  $q$, the map achieves 11/22 four-class accuracy, including 9/10 twisted
  and 2/6 chimera-like cells; it never predicts the synchronized class.
  These grid results do not establish that copying is generally necessary
  for chimera-like coexistence.}
  \label{fig:phase-map}
\end{figure}

\raggedbottom
\section{Supplementary experimental results}

\Needspace{7\baselineskip}
\subsection{Radius-sweep results}
\label{app:radius-sweep}
\begin{table}[H]
  \centering
  \small
  \setlength{\tabcolsep}{4pt}
  \begin{tabular}{ccclcl}
    \toprule
    & \multicolumn{2}{c}{\textbf{Qwen-2.5-72b}} & & \multicolumn{2}{c}{\textbf{gpt-5-nano}} \\
    \cmidrule(lr){2-4}\cmidrule(lr){5-6}
    $R$ & $\Delta Z$ & $r$ & regime & $\Delta Z$ & regime \\
    \midrule
    3   & 0.263          & 0.13 & mixed         & 0.450          & twisted (2/3) \\
    4   & 0.628          & 0.37 & chimera-like  & 0.273          & twisted \\
    5   & 0.630          & 0.45 & chimera-like & 0.483          & chimera-like \\
    6   & \textbf{0.725} & 0.36 & chimera-like & \textbf{0.573} & chimera-like \\
    7   & 0.543          & 0.26 & chimera-like & 0.555          & chimera-like \\
    9   & 0.562          & 0.52 & chimera-like  & 0.398          & mixed \\
    10  & 0.484          & 0.62 & chimera-like (2/3) & 0.190       & incoherent (2/3) \\
    \bottomrule
  \end{tabular}
  \caption{Canonical steady-window $\Delta Z$ against coupling radius $R$,
  $n=3$ per radius; bold marks each model's peak. Labels summarize trial-level
  regimes and report the majority where cells are mixed.}
  \label{tab:phase-diagram}
\end{table}

\begin{figure}[H]
  \centering
  \includegraphics[
    width=0.9\linewidth]{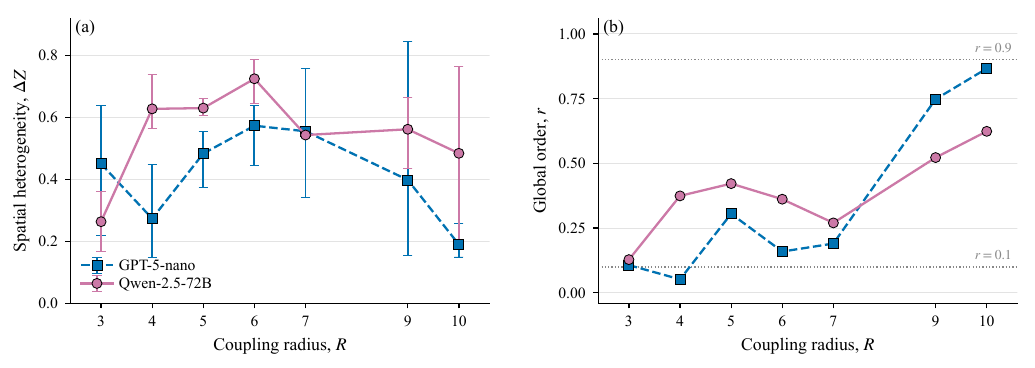}
  \caption{Spatial heterogeneity and global order across the $T=30$ radius sweeps ($n=3$ per radius). Both models attain highest observed mean heterogeneity at $R=6$; the high-radius responses differ.}
  \label{fig:phase-diagram}
\end{figure}

\Needspace{7\baselineskip}
\subsection{Model replication and cross-model analyses}
\begin{figure}[H]
  \centering
  \includegraphics[width=\textwidth]{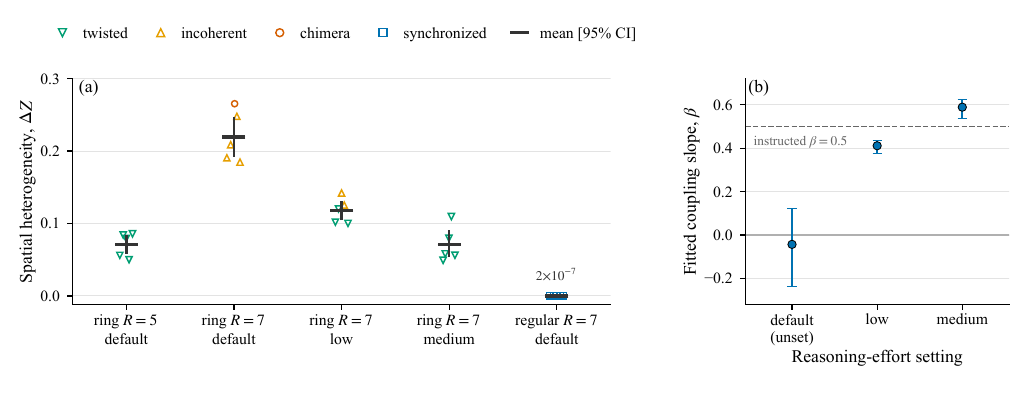}
  \caption{Gemini replication (five trials per condition; $N=50$, $T=30$, temperature 0). (a) Trial-level regimes and mean spatial heterogeneity with 95\% confidence intervals. (b) Ring-only fitted coupling slopes for default (unset), low, and medium reasoning-effort settings; the dashed line denotes the equal-weight reference ($\beta=0.5$). Default pools ring radii $R=5$ and $R=7$, whereas low and medium use only $R=7$. Provider routing was not pinned and differed across conditions, so differences cannot be attributed exclusively to reasoning effort.} 
  \label{fig:gemini}
\end{figure}

\Needspace{7\baselineskip}
\subsection{Network topology and task transfer}

The judges-task primary analysis retains a pooled minimal-versus-medium reasoning-effort difference ($p=0.0006$); the other two pooled effort contrasts are not significant. Passage-specific contrasts are exploratory. The topology results and full protocol-validity analysis are reported in the main text and Appendix~\ref{app:trial-validity}.

\begin{figure}[H]
  \centering
  \includegraphics[width=\textwidth]{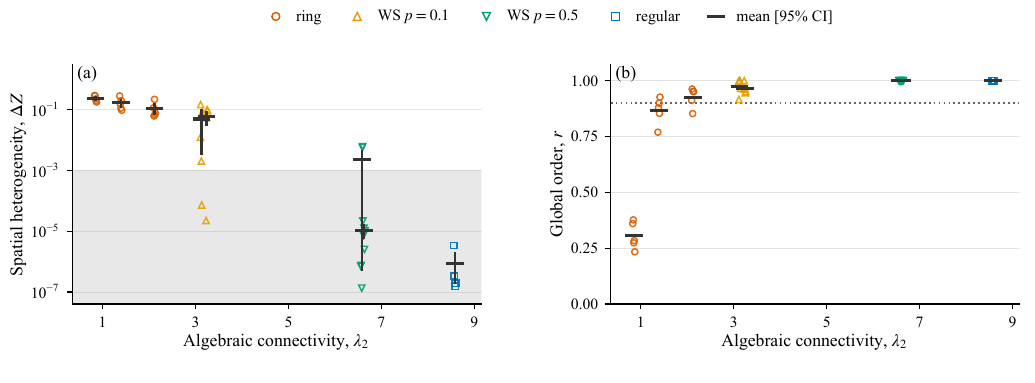}
  \caption{Watts--Strogatz spectral ladder for minimal-effort
  \texttt{gpt-5-mini} ($n=5$ per graph). As measured algebraic connectivity
  $\lambda_2$ rises, canonical heterogeneity declines and global order approaches
  consensus. Distinct graph draws are retained rather than averaged.}
  \label{fig:ws-sweep}
\end{figure}

\clearpage
\flushbottom
\section{Spectral and persistence analyses}

\Needspace{7\baselineskip}
\subsection{Spectral approximation audits}
\label{app:spectral-audits}

\paragraph{Watts--Strogatz degree variation and contraction rates.}
The realized Watts--Strogatz graphs preserve mean degree exactly but not
individual degrees. At $N=50$ and $\langle k\rangle=14$, the degree standard
deviation is $1.0$--$1.1$ at $p=0.1$ and $2.5$--$2.6$ at $p=0.5$.
For $P=(D+I)^{-1}(A+I)$ with eigenvalues $\mu_m$, the update operator
$W_\beta=(1-\beta)I+\beta P$ has contraction factor off consensus
$\max_{m\geq2}|1-\beta+\beta\mu_m|$. In the regime where the second
eigenvalue controls contraction, the corresponding rate is
$\gamma_{\mathrm{true}}=\beta(1-\mu_2)$, compared with the regular-graph
approximation $\gamma=\beta\lambda_2/(\bar{d}+1)$. The common factor $\beta$
cancels in their relative discrepancy. On the realized instances the
discrepancy is approximately $1\%$ at $p=0.1$ and $10\%$ at $p=0.5$, and
zero on the ring and degree-preserving control.

\paragraph{Twisted-sector rate comparison.}
The slowest non-consensus multiplier of Eq.~\eqref{eq:twist-multiplier}
gives a rate distinct from $\beta\lambda_2/(d+1)$. The two rates agree to
within $6\%$, $9\%$, and $13\%$ at $R=5,6,7$, respectively, the radii used
in the lifetime experiments. The discrepancy reaches $27\%$ at $R=10$.
These comparisons motivate treating $1/\lambda_2$ as an approximate scaling
prediction for the twisted sector rather than an exact contraction law.

\paragraph{Finite-size spectral comparison.}
The asymptotic Friedman bound has an $\varepsilon$ allowance and is not a
guaranteed spectral floor at finite $N$ \citep{friedman2008proof}.
The realized random-regular controls have $\lambda_2=4.96$ for $d=10$ and
$8.58$ for $d=14$, exceeding the asymptotic reference values $4.00$ and
$6.79$. Table~\ref{tab:spectral} uses measured gaps at $R=5,7$ and the
asymptotic reference elsewhere. At $R=7$, the ring modes $m=\pm1,\pm2$ lie
below the measured gap; only $m=\pm1$ lie below the asymptotic reference.
The full ring spectrum is checked at each swept radius: no additional
interior mode falls below the corresponding comparison gap.

\Needspace{7\baselineskip}
\subsection{Graph spectral quantities}
\label{app:graph-spectral}

\paragraph{Spatial structure of surviving fragments.}
\label{sec:spatial-structure}
We compare the ring spectrum with the spectral gap of the random-regular control. For the $N=50$ graphs used here, the ring modes below this gap are the lowest spatial harmonics, $m=\pm1$ and, at smaller coupling radii, $m=\pm2$ (Table~\ref{tab:spectral}). Since the ring spectrum is not monotonic in $|m|$, this comparison is made against the full spectrum rather
than the lowest modes alone.

\begin{table}[H]
\centering\small\setlength{\tabcolsep}{5pt}
\begin{tabular}{@{}lrrr@{}}
\toprule
Graph ($d$) & $\lambda_2$ & $(d{+}1)/\lambda_2$ & $n_{<}$ \\
\midrule
Ring $R{=}3$ \;(6)   & 0.22 & 32.0 & 4$^\dagger$ \\
Ring $R{=}4$ \;(8)   & 0.47 & 19.3 & 4$^\dagger$ \\
Ring $R{=}5$ (10)    & 0.85 & 13.0 & 4 \\
Ring $R{=}6$ (12)    & 1.39 &  9.3 & 4$^\dagger$ \\
Ring $R{=}7$ (14)    & 2.12 &  7.1 & 4 \\
Ring $R{=}9$ (18)    & 4.19 &  4.5 & 2$^\dagger$ \\
Ring $R{=}10$ (20)   & 5.57 &  3.8 & 2$^\dagger$ \\
\midrule
Random reg. (10)     & 4.96 &  2.2 & 0 \\
Random reg. (14)     & 8.58 &  1.7 & 0 \\
\bottomrule
\end{tabular}
\caption{Spectral quantities at $N=50$. Dividing $(d+1)/\lambda_2$ by the
measured $\beta$ gives the characteristic collapse time $\tau$. Random-regular
rows report measured gaps. $n_<$ counts ring modes below the degree-matched
control gap, using measured gaps at $R=5,7$ and the asymptotic reference
elsewhere ($\dagger$).}
\label{tab:spectral}
\end{table}

\Needspace{7\baselineskip}
\subsection{Lifetime calibration}
\label{app:lifetime-calibration}

The closed-form approximation overpredicts collapse time at all seven observed cell means (by $1.13$--$3.18$ times), but not at every individual trial: 33 of 35 observed collapses lie below the identity line and two lie above. Five additional trials are right-censored. The main-text spectral-lifetime figure presents the trial-level comparison.

\Needspace{7\baselineskip}
\subsection{Persistent-minority audit}
\begin{figure}[H]
  \centering
  \includegraphics[width=0.92\textwidth]{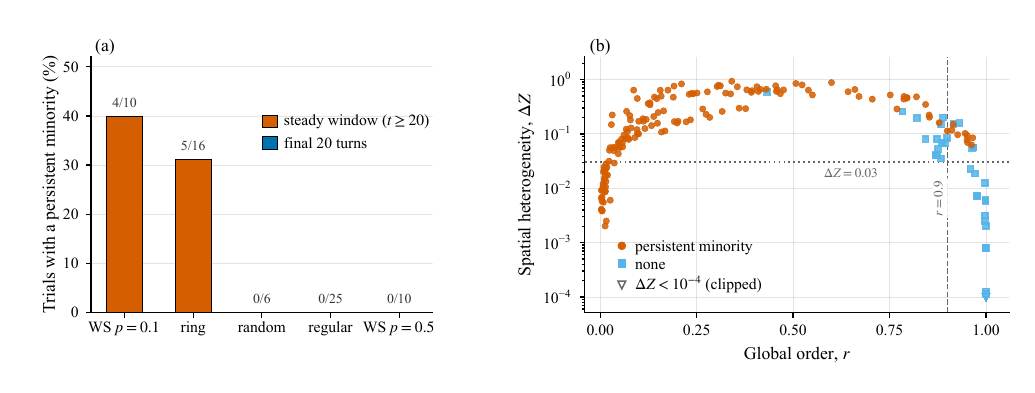}
  \caption{Persistent-minority audit. (a) Among the 67 LLM clock trials
  with steady-window global order $r\geq0.9$, persistent minorities occur
  in 4/10 WS $p=0.1$ and 5/16 ring trials during the steady window
  ($t\geq20$), and in none of the trials during the final 20 turns.
  (b) All 228 LLM clock trials, with each point classified using the same
  steady-window persistent-minority detector as in (a). The vertical line
  marks $r=0.9$ and the horizontal line marks $\Delta Z=0.03$; values
  below $10^{-4}$ are clipped for display. Low spatial heterogeneity
  alone does not imply global synchronization: all 36 trials with
  $r<0.9$ and $\Delta Z<0.03$ are classified as twisted. Panel (a)
  aggregates a filtered subset by topology, whereas panel (b) shows
  individual trials across the full clock-task set. Both exclude
  non-LLM calibration baselines.}
  \label{fig:hidden-fragmentation}
\end{figure}

\end{document}